\newtheorem{proposition}{Proposition}[section]
\newtheorem{definition}{Definition}[section]
\newcommand\norm[1]{\lVert#1\rVert}
\DeclareMathOperator*{\argmin}{arg\,min}
\DeclareMathOperator*{\argmax}{arg\,max}
\newcommand{\relmiddle}[1]{\mathrel{}\middle#1\mathrel{}}
\begin{document}
\title{
A genetic algorithm to generate maximally orthogonal frames in complex space
}

\author{Sebastián Roca-Jerat}
\email{sroca@unizar.es}
\affiliation {Instituto de Nanociencia y Materiales de Aragón (INMA), CSIC-Universidad de Zaragoza, Zaragoza 50009, Spain}
\affiliation{Departamento de Física de la Materia Condensada, Universidad de Zaragoza, Zaragoza 50009, Spain}

\author{Juan Román-Roche}
\affiliation {Instituto de Nanociencia y Materiales de Aragón (INMA), CSIC-Universidad de Zaragoza, Zaragoza 50009, Spain}
\affiliation{Departamento de Física Teórica, Universidad de Zaragoza, Zaragoza 50009, Spain}
  
\date{\today}

\begin{abstract}

A frame is a generalization of a basis of a vector space to a redundant overspanning set whose vectors are linearly dependent. Frames find applications in signal processing and quantum information theory. We present a genetic algorithm that can generate maximally orthogonal frames of arbitrary size $n$ in $d$-dimensional complex space.
First, we formalize the concept of maximally orthogonal frame and demonstrate that it depends on the choice of an energy function to weigh the different pairwise overlaps between vectors. 
Then, we discuss the relation between different energy functions and well-known frame varieties such as tight and Grassmannian frames and complex projective $p$-designs. Obtaining maximally orthogonal frames poses a global non-convex minimization problem. We discuss the relation with established numerical problems such as the Thomson problem and the problem of finding optimal packings in complex projective space. To tackle the minimization, we design a hybrid genetic algorithm that features local optimization of the parents. To assess the performance of the algorithm, we propose two visualization techniques that allow us to analyze the coherence and uniformity of high-dimensional frames. The genetic algorithm is able to produce highly-symmetric universal frames, such as equiangular tight frames, symmetric, informationally complete, positive operator-valued measurements (SIC-POVMs) and maximal sets of mutually unbiased bases, for configurations of up to $d=6$ and $n=36$, with runtimes of the order of several minutes on a regular desktop computer for the largest configurations.

\end{abstract}

\maketitle

\section{Introduction}
\label{intro}

At the beginning of every introductory course on linear algebra, one is presented with the concept of a basis of a vector space. A basis is a collection of vectors that are linearly independent and form a spanning set of the vector space. In fact, the dimension of a vector space is defined as the number of elements of a basis. If the vector space is further endowed with an inner product, the condition of linear independence is often superseded by the stronger condition of orthogonality between the basis vectors, thus defining an orthogonal basis.  A consequence of these definitions is that in an inner-product (vector) space of dimension $d$ one can only create sets of at most $d$ orthogonal vectors (which would constitute orthogonal bases). In light of this realization, it is only natural to wonder how to generalize the concept of an orthogonal basis to sets of more than $d$ vectors. These over-spanning sets are typically referred to as frames, which are defined and characterized in frame theory \cite{duffin1952a, daubechies1992ten}. The redundancy of frames provides robustness against errors and is thus useful in the encoding and retrieval of signals, with applications in telecommunications and (quantum) information theory \cite{casazza2013finite}. For instance, equiangular tight frames with $d^2$ vectors provide symmetric, informationally complete, positive operator-valued measures that are optimal for quantum state tomography \cite{renes2004symmetric, scott2006tight, slomczynski2015highly, yoshida2022construction, gu2020fast, jameson2024optimal}.

Several families of frames with different properties can be defined, such as tight and Grassmannian frames \cite{strohmer2003grassmannian}. Many of these properties can be attributed to particular instances of what we call maximally orthogonal frames: frames of $n > d$ vectors that, although not completely orthogonal among themselves, are maximally orthogonal in the sense that their pairwise inner products are minimal.

As orthogonality is a pairwise property, increasing the orthogonality between a given pair of vectors will often come at the expense of decreasing the orthogonality between other pairs of vectors. 
Depending on how much value one assigns to the orthogonality of each pair, one will favor one ``maximally orthogonal'' set of vectors or another. Thus, the first step in this paper is providing a formal definition for maximally orthogonal frames of a complex space of arbitrary dimension. Once armed with this definition, we discuss their properties and establish their relation with notable frame families such as tight and Grassmannian frames, equiangular tight frames \cite{strohmer2003grassmannian}, symmetric, informationally complete, positive operator-valued measurements (SIC-POVMs) \cite{fuchs2017the}, maximal sets of mutually unbiased bases \cite{schwinger1960unitary, ivonovic1981geometrical} and complex projective $t$-designs \cite{delsarte1977spherical, hoggar1982tdesings}. We also show that finding maximally orthogonal frames constitutes, in most cases, a numerical optimization problem. We discuss the relation with established numerical problems such as the Thomson problem \cite{thomson1904on}, the problem of finding optimal packings in complex projective space \cite{jasper2019game} and the problem of finding optimal reference states for quantum classifiers \cite{perezsalinas2020data, rocajerat2024qudit}. Inspired by the use of a genetic algorithm for the Thomson problem \cite{morris1996geneticalgorithm}, we present a genetic algorithm that is capable of generating maximally orthogonal frames of arbitrary size in a complex space of arbitrary dimension.  Then, we visualize and measure the quality of the frames produced by the genetic algorithm. In the cases where the problem of finding maximally orthogonal frames is equivalent to the Thomson problem or the problem of finding optimal packings in complex projective space, we compare the genetic algorithm with state-of-the-art (SOTA) numerical methods.

The remainder of the manuscript is organized as follows. Section \ref{definition} is dedicated to defining maximally orthogonal frames, discussing their properties, and relating them to notable frame families. In Section \ref{predecesors} we discuss related numerical problems. The genetic algorithm is presented in Section \ref{algorithm}. In Section \ref{results}, we visualize and discuss the results obtained with the algorithm and its performance. We end the paper with the conclusions and outlook of our work in Sec. \ref{conclusions}. We also provide technical details and complementary results in the appendices.

\section{Maximally orthogonal frames}
\label{definition}
\subsection{Definition}

The central object of our discussion is a set of vectors $\Phi_{d, n} = \{ |\phi_i\rangle \}_{i=1}^{n} \subset \mathbb C^d$ with $\mathbb C^d$ a finite complex coordinate space of dimension $d$. The canonical sesquilinear inner product of two vectors $|\phi_1\rangle$, $|\phi_2 \rangle$ is denoted $\langle \phi_1 | \phi_2 \rangle$. 
\begin{definition}
    \cite{duffin1952a} A set of vectors $\Phi_{d, n} = \{ |\phi_i\rangle \}_{i=1}^{n} \subset \mathbb C^d$ with $n\geq d$ is a frame for $\mathbb C^d$ if there exist frame bounds $0 < A \leq B < \infty$ such that for every $|v\rangle \in \mathbb C^d$
    \begin{equation}
        A \norm{v} \leq \sum_{i=1}^n |\langle v | \phi_i \rangle|^2 \leq B \norm{v} \,.
        \label{eq:framecondition}
    \end{equation}
    \label{def:frame}
\end{definition}
If one can set $A=B$ in Eq. \eqref{eq:framecondition}, the frame is said to be \emph{tight} \cite{waldron2018introduction}. For $n = d$, a frame is just a basis. If $|\langle \phi_i | \phi_i \rangle | = 1 \; \forall \phi_i \in \Phi_{d, n}$, $\Phi_{d, n}$ is a unit norm frame.
Without loss of generality, we will assume that all vectors are of unit norm in the following. In a vector space of finite dimension, such as the ones considered in this paper, \emph{a frame is essentially a finite spanning set}.

We can formulate an energy associated to a set
\begin{equation}
    E_W(\Phi_{d, n}) = \sum_{i \neq j} W\left(|\langle\phi_i  |\phi_j \rangle|\right) \,,
    \label{eq:generalenergy}
\end{equation}
with $W$, the weighting function, an arbitrary real-valued and increasing function well-defined in the interval $[0, 1]$.
Such an energy grows whenever the \emph{overlap} (the modulus of the inner product) of two vectors increases, ceteris paribus, and thus it will be minimized by a set with large over-all orthogonality. We denote a minimizing set as $\bar \Phi_{d, n}$.
\begin{proposition}\label{prop:MOF}
    A set of unit norm vectors $\bar \Phi_{d, n} = \{ |\phi_i\rangle \}_{i=1}^{n} \subset \mathbb C^d$ that minimizes $E_W$ \eqref{eq:generalenergy} is a unit norm frame.
\end{proposition}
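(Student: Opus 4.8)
The plan is to reduce the frame property to a spanning condition and then rule out non-spanning minimizers by an explicit, norm-preserving perturbation. Since every vector of $\bar\Phi_{d,n}$ is assumed to be of unit norm, the ``unit norm'' qualifier is automatic, and by the remark preceding the proposition---that in finite dimension a frame is equivalently a finite spanning set---it suffices to show that any minimizer of $E_W$ in \eqref{eq:generalenergy} spans $\mathbb{C}^d$. (A minimizer exists because $E_W$ is evaluated on the compact product of unit spheres $(S^{2d-1})^{n}$, but as the statement already posits a minimizing set, this is only a consistency remark and plays no role in what follows.) I would then argue by contradiction: suppose $\bar\Phi_{d,n}$ does \emph{not} span, so all $|\phi_i\rangle$ lie in a proper subspace $V \subsetneq \mathbb{C}^d$ with $m := \dim V \leq d-1$.

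The key observation is a counting one. Because $n \geq d > m$, there are strictly more unit vectors than the dimension of $V$, so they cannot form an orthonormal set; hence there is at least one pair $(i,j)$ with $\langle \phi_i | \phi_j\rangle \neq 0$. Since $m < d$, I can choose a unit vector $|w\rangle$ orthogonal to $V$, and deform the single vector $|\phi_i\rangle$ into
\begin{equation}
    |\phi_i(\theta)\rangle = \cos\theta\,|\phi_i\rangle + \sin\theta\,|w\rangle, \qquad \theta \in (0, \tfrac{\pi}{2}),
\end{equation}
which is again of unit norm because $|w\rangle$ is orthogonal to $|\phi_i\rangle \in V$. As every other vector $|\phi_k\rangle \in V$ satisfies $\langle w | \phi_k\rangle = 0$, each overlap involving $i$ transforms simply as $|\langle \phi_i(\theta) | \phi_k\rangle| = \cos\theta\,|\langle \phi_i | \phi_k\rangle|$, while all pairs not involving $i$ are left untouched. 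This deformed set is a legitimate competitor for the minimization.

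It then remains to read off the effect on the energy. Each term of \eqref{eq:generalenergy} involving $i$ has $W(|\langle \phi_i|\phi_k\rangle|)$ replaced by $W(\cos\theta\,|\langle \phi_i|\phi_k\rangle|)$; since $\cos\theta < 1$ and $W$ is increasing, no such term grows, and for the pair $(i,j)$ with $\langle\phi_i|\phi_j\rangle \neq 0$ the argument of $W$ strictly decreases, so that term strictly decreases. Hence the total energy of the deformed set is strictly smaller than $E_W(\bar\Phi_{d,n})$, contradicting minimality; therefore the minimizer spans $\mathbb{C}^d$ and is a unit norm frame. The main point requiring care is the monotonicity input: the conclusion genuinely relies on $W$ being \emph{strictly} increasing, since for a merely non-decreasing $W$ (a constant being the extreme case) non-spanning configurations could also minimize. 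The only mild additional subtlety is securing a nonzero overlap to turn the weak decrease into a strict one, which the counting $n > m$ supplies.
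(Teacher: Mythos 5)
Your proof is correct and takes essentially the same route as the paper's: both argue by contradiction that a non-spanning minimizer leaves the orthogonal complement of its span unoccupied, and both strictly lower the energy by moving a vector having a nonzero overlap into that complement (the paper replaces it outright by an orthogonal vector, which is just your deformation at $\theta = \pi/2$), invoking the monotonicity of $W$. The only minor differences are that you treat $n=d$ and $n>d$ uniformly via the counting $n > \dim V$, whereas the paper disposes of $n=d$ as a separate case, and that you make explicit the reliance on \emph{strict} monotonicity of $W$, which the paper leaves implicit in its ``equality if and only if orthogonal'' step.
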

\begin{proof}
    We have to show that $\bar \Phi_{d, n}$ is a spanning set of $\mathbb C^d$. First, if $n=d$ any set that minimizes $E_W$ is an orthogonal basis and thus spans $\mathbb C^d$. In the following we assume $n > d$ and prove the proposition by contradiction. Let us assume that $\bar \Phi_{d, n}$ is not spanning and denote its orthogonal complement as $\bar \Phi_{d, n}^\perp$.
    Note that $W(0) \leq W\left(|\langle\phi_i  |\phi_j \rangle|\right) \leq W(1)$, with equality in the lower bound if and only if $| \phi_i \rangle$ and $| \phi_j \rangle$ are orthogonal to each other. Let us define
    \begin{equation}
        \phi_k^{\not \perp} = \{|\phi_{j \neq k}\rangle \in \bar \Phi_{d, n} | W\left(|\langle\phi_k |\phi_j \rangle|\right) > W(0) \} \,.
    \end{equation}
    Let us choose a vector $|\phi_k\rangle \in \bar \Phi_{d, n}$ such that $\phi_k^{\not \perp}$ is not the empty set. A non-spanning set with $n > d$ vectors must contain al least three such vectors. Then, the frame $\bar \Phi_{d, n}'$ formed by replacing $|\phi_k\rangle$ by $|\phi_k'\rangle \in \bar \Phi_{d, n}^\perp$ in $\bar \Phi_{d, n}$ has energy $E_W(\bar \Phi_{d, n}') = E_W(\bar \Phi_{d, n}) - \epsilon$ with
    \begin{equation}
        \epsilon = 2\sum_{j \in \phi_k^{\not \perp}} \left(  W\left(|\langle\phi_k  |\phi_j \rangle|\right) - W(0) \right) > 0\,,
    \end{equation}
    which implies that $\bar \Phi_{d, n}$ does not minimize $E_W$.
\end{proof}
Therefore we refer to any set that minimizes $E_W$ as a \emph{maximally orthogonal frame} (MOF). Depending on the functional form of the weighting function, $W$, the impact of large and small overlaps on the energy can vary and the MOF will depend on $W$. We denote it $\bar \Phi_{d, n}[W]$. Thus, \emph{there is not a unique definition for a MOF}.

\subsection{Properties of the different maximally orthogonal frames}
\label{subsec:properties_MOF}
\subsubsection{Weighting functions: $p$-frame potential and Riesz $s$-energy}

We consider two main families of weighting functions: for $W(x) = x^{2p}$ the energy becomes a $p$-frame potential \cite{renes2004symmetric, bilyk2021optimal, chen2020on}
\begin{equation}
    {\rm FP}_p(\Phi_{d, n}) = \sum_{i\neq j} |\langle\phi_i  |\phi_j \rangle|^{2p} 
    \label{eq:framepotential}
\end{equation}
and for $W(x) = D^{-s}(x)$, with $D(x)$ the trace or chordal distance 
\begin{equation}
    D(x) = 2 \sqrt{1 - x^2} \,,
    \label{eq:tracedistance}
\end{equation}
the energy becomes a projective Riesz $s$-energy \cite{chen2020on, foundations}
\begin{equation}
    {\rm RE}_{s} (\Phi_{d, n}) = \sum_{i \neq j} \left(2 \sqrt{1 - |\langle \phi_i | \phi_j \rangle|^2}\right)^{-s} \,.
    \label{eq:Rieszenergy}
\end{equation}
To be precise, $D^{-s}(x)$ is not per se a valid weighting function as it is not defined in the limit $x \to 1$. This can be fixed, by imposing that $D^{-s}(1) = +\infty$, which would correspond to taking the limit from the left, or by defining the function by parts, such that $D^{-s}(1-\epsilon \leq x \leq 1) = D^{-s}(1 - \epsilon)$ with $\epsilon$ as a free parameter.
In practice, we do the latter.

In the following, we discuss the relation between the $p$-frame potential and the Riesz $s$-energy and the properties of the frames that minimize them.

\subsubsection{Riesz $s$-energy and asymptotic uniformity}
\label{uniformity}

The main interest of the Riesz $s$-energy is that it allows us to reason in terms of the poppy-seed bagel theorem, which states that the minimal-energy arrangement of $n$ particles constrained to a bounded $D$-dimensional surface and subject to a potential of the form $r^{-s}$, where $r$ is the distance between particles, tends to be uniformly distributed for large $n$ when $s \geq D$ \cite{hardin2004discretizing}. 
To understand this result, one must think physically in terms of short- and long-range interactions. For large $s$, large overlaps (short-range interactions) are weighted more heavily in the energy relative to small overlaps (long-range interactions). This implies that in the minimization process vectors prioritize decreasing the few large overlaps with their nearest neighbors before trying to minimize a large number of small overlaps with distant neighbors. The resulting arrangement is uniform because each vector is mostly concerned with being apart from its nearest neighbors and almost insensitive to the global arrangement of vectors. For small $s$, the opposite is true, vectors prioritize maximizing the number of vectors that they have a small overlap with, which may come at the cost of increasing the overlap with a few nearest neighbors, leading to crowding and lack of uniformity. 

The Riesz $s$-energy is invariant under changes in the global phase of the vectors, so we can identify equivalence classes for all unit vectors differing by a global phase, which essentially defines the complex projective space $\mathbb CP^{d-1}$. A parameterization of $\mathbb CP^{d-1}$ in real coordinates forms a manifold of dimension $2(d-1)$. Following the poppy-seed bagel theorem, we will have to set $s \geq 2(d-1)$ to generate MOFs, $\bar \Phi_{d, n}[s]$, that are asymptotically uniform, in the sense of uniformly distributed on the manifold that parametrizes $\mathbb CP^{d-1}$, for $n \to \infty$ \cite{chen2020on}.

The same reasoning can be applied to the frame potential \eqref{eq:framecondition}. Although we cannot invoke the poppy-seed bagel theorem to produce precise regimes depending on the value of $p$ like we have just done for the Riesz $s$-energy, the argument on the basis of competing short- and long-range interactions remains valid. Small values of $p$ give more relative importance to long-range interactions and thus favor less uniform frames, and vice versa.

\subsubsection{Complex projective $p$-designs and uniformity}

Uniformity is only properly defined in the distribution sense, for $n\to \infty$. In this limit, it is useful to view uniformity as the property of a frame that integrating a function over the Haar measure of the corresponding manifold is the same as averaging the function over the frame elements. For a finite frame, an average over the frame elements will only provide an approximation of the integral. In this context, a frame is said to be a complex projective $p$-design if averaging a polynomial of degree $p$ or less over the frame elements is equivalent to integrating the polynomial over the Haar measure of the corresponding manifold \cite{scott2006tight}. In layman's terms, finite frames cannot be uniform, but the frame vectors can be sufficiently evenly distributed on the manifold that averaging simple enough functions over them is equivalent to averaging over the full manifold.

The $p$-frame potential obeys the Welch bound
\begin{equation}
    {\rm FP}_p(\Phi_{d, n}) \geq \frac{n^2}{\binom{d + p - 1}{p}} - n\,,
    \label{eq:welch}
\end{equation}
with equality when the frame is a complex projective $p$-design \cite{welch1974lower, klappenecker2005mutually}. Thus, the uniformity of a finite frame can be quantified by computing the largest $p$ for which it saturates the Welch bound. This also implies that minimizers of the $p$-frame potential tend to be projective $p$-designs.

\subsubsection{$1$-designs are tight}

The frame potential was introduced by Bennedeto and Fickus, originally with $p=1$, for the study of finite normalized tight frames \cite{benedetto2003finite}. They proved that ${\rm FP}_1$ has several convenient properties: its global minima are tight frames and saturate the Welch bound ${\rm FP}_1(\bar \Phi_{d, n} [p = 1]) = n^2/d - n$, and all its local minima are degenerate, i.e. they are global minima and thus tight frames and $1$-designs. This tells us that we can see tight frames as a particular instance of MOFs as defined in Proposition \ref{prop:MOF}. In fact, Benedetto and Fickus already refer to a set of vectors that minimize the frame potential ${\rm FP}_1$ as maximally orthogonal. Thus, they equate MOFs with tight frames. As argued above, it is natural to relax this equation and consider tight frames as just one case of many MOFs. Just like ${\rm FP}_1$ leads to MOFs that are tight, other weighting functions are equally valid and endow the corresponding MOFs with other useful properties, such as uniformity or low coherence.
We mention in passing that another consequence of Benedetto and Fickus's work is that finding tight frames is an easy task. A local optimization from an arbitrary initial condition suffices to construct tight frames numerically, although constructive analytical methods have also been devised \cite{casazza2012autotuning, cahill2013constructing}. Other weighting functions do not lead to such an easily minimizable energy function, and thus obtaining the corresponding MOFs poses a numerical challenge.

We are now in the position to mention that tight frames are not, just by virtue of being tight, uniform. This is to be expected since they are the minimizers of ${\rm FP}_1$ and thus, generally, just projective $1$-designs, the lowest rank of uniformity for finite frames. Note also that there must exist a tight frame for every degenerate global minima of ${\rm FP}_1$, so a tight frame is not a unique object for a given $d$ and $n$, whereas we identify a uniform frame as a highly-symmetric unique arrangement, modulo orthogonal transformations. In fact, for a given configuration of $d$ and $n$ tight frames form either a manifold or a disjoint union of manifolds \cite{dykema2003manifold}.  

\subsubsection{Coherence: the $p \to \infty$ and $s \to \infty$ limits}
\label{coherence}

For $s \to \infty$, we can write 
\begin{equation}
    \lim_{s\to\infty} ({\rm RE}_s)^{1/s} = m^{1/s} \max_{i \neq j}  \frac{1}{D(|\phi_i \rangle, |\phi_j \rangle)} \,,
\end{equation}
where $m$ is the multiplicity of the largest overlap. Therefore, only the maximum overlap(s) (minimum distance(s)) contribute to the energy and sets that only differ in submaximal overlaps will be degenerate. In this setting, the MOF will be given by 
%
\begin{equation}
    \bar \Phi_{d, n}[s \to \infty] = \argmin_{\{\Phi_{d, n}\}} \left\{ \max_{i \neq j} |\langle \phi_i | \phi_j \rangle| \right\} \,.
    \label{eq:mincoherence}
\end{equation}
Note that the same MOF is obtained by minimizing the $p$-frame potential in the limit $p \to \infty$, the two weighting functions are equivalent in this limit. Equation \eqref{eq:mincoherence} indicates that $\bar \Phi_{d, n}[s \to \infty]$ is the set that minimizes the maximum overlap or, equivalently, that maximizes the minimum (trace) distance \eqref{eq:tracedistance} between vectors. Problems of this sort are typically referred to as \emph{packing problems} \cite{packing}. It is also important to identify the coherence $\mu$ as the quantity that is minimized in Eq. \eqref{eq:mincoherence}
\begin{equation}
    \mu(\Phi_{d, n}) = \max_{i \neq j} |\langle \phi_i | \phi_j \rangle| \,.
    \label{eq:coherence}
\end{equation}
The minimizers of coherence are called \emph{Grassmannian} frames \cite{strohmer2003grassmannian}. Low coherence and ``tightness'' are dual properties in terms of the encoding capabilities of a frame. A tight frame allows ``perfect reconstruction'' of the encoded information, whereas a Grassmannian frame maximizes ``robustness against error'' \cite{haas2017on}. In general, it is impossible to optimize both simultaneously. This is evident from the fact that each is associated with minimizing a different energy function, ${\rm FP}_1$ for tightness and ${\rm RE}_\infty$ or ${\rm FP}_\infty$ for low coherence. 

Similarly to the Welch bound for the $p$-frame potential, there exist bounds for the coherence. Namely,
\begin{itemize}
    \item Bukh-Cox bound \cite{bukh2020nearly} \begin{equation}\label{eq:bukh}
        \mu(\Phi_{d, n}) \geq \frac{(n-d)^2}{n\left[1 + (n-d-1)\sqrt{n-d+1}\right] - (n-d)^2}\hspace{1cm}\mathrm{if}\ n > d
    \end{equation}
    \item Welch-Rankin bound \cite{welch1974lower, rankin1955the} \begin{equation}\label{eq:welchrankin}
        \mu(\Phi_{d, n}) \geq \sqrt{\frac{n-d}{d(n-1)}}\hspace{1cm}\mathrm{if}\ n > d
    \end{equation}
    \item Orthoplex bound \cite{conway2002packing} \begin{equation}\label{eq:orth}
        \mu(\Phi_{d, n}) \geq \frac{1}{\sqrt{d}}\hspace{1cm}\mathrm{if}\ n > d^2
    \end{equation}
    \item Levenstein bound \cite{levenshtein1982bounds} \begin{equation}\label{eq:leven}
        \mu(\Phi_{d, n}) \geq \sqrt{\frac{2n - d(d+1)}{(n-d)(d+1)}}\hspace{1cm}\mathrm{if}\ n > d^2
    \end{equation}
\end{itemize}
Frames that saturate the maximum applicable bound are optimal packings, also known as optimal Grassmannian frames.

\subsubsection{Coherence and degeneracy}
\label{subsubsec:degeneracy}

To understand the geometric properties of Grassmannian frames, let us discuss the particular case of $d=2$, $n=5$. Because vectors in $d=2$ equivalent under global phase changes form $\mathbb CP^1$, which is diffeomorphic to the $2$-sphere $S^2$, they can be visualized on the Bloch sphere. Furthermore, for $d=2$ the trace distance between vectors \eqref{eq:tracedistance} is equal to the Euclidean distance between their corresponding points on the Bloch sphere. For larger $d$ the nice visualization is lost, although some intuition about the topology of $\mathbb CP^{d-1}$ can still be obtained \cite{geometry}. Using the Bloch sphere, we can see that there is a continuum of degenerate frames that minimize the coherence, $\{\bar \Phi_{2, 5}[p \to \infty] \}$, which are illustrated in Fig. \ref{fig:cohdeg}. 
\begin{figure}
    \centering
    \includegraphics[width=0.75\columnwidth]{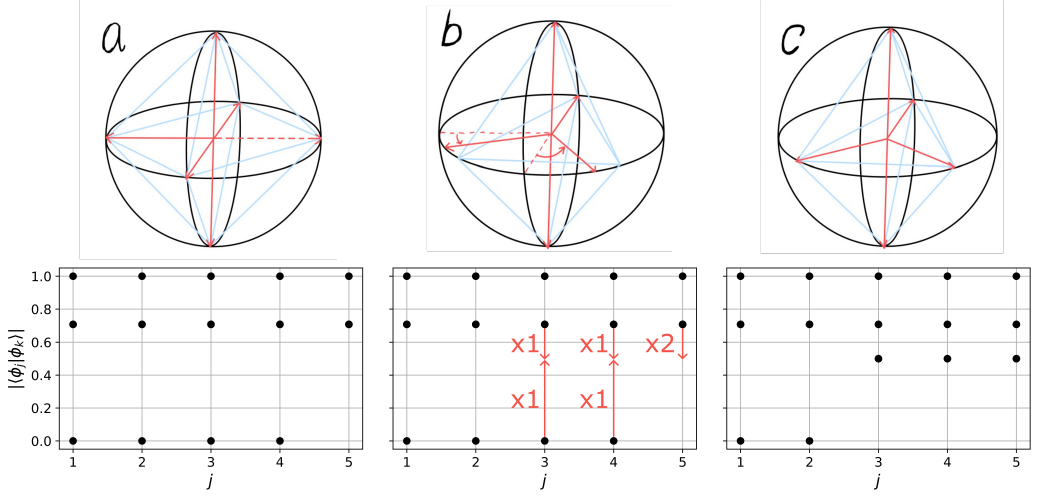}
    \caption{Degeneracy of the coherence. Bloch-sphere representation (top) and overlaps among vectors (bottom) of the different Grassmannian frames for $d=2$ and $n=5$. Column a) shows the distribution of vectors obtained by subtracting one vector from the equator from $\bar \Phi_{2, 6}[s \to \infty]$. The frames depicted in columns b) and c) are obtained by continuously spreading the remaining three vectors at the equator until they form an equilateral triangle. Note that, for each $j$ in all three panels, there are 5 dots in the plot, each corresponding to the overlap of vector $j$ with every vector in the frame (including itself). Visually, there appears to be only three dots for each $j$ because there are only three distinct overlaps and so three of the dots coincide perfectly.}
    \label{fig:cohdeg}
\end{figure}
They are degenerate because their energy is determined by the largest overlap, which is the overlap between the vectors at the Equator of the Bloch sphere and the vectors at the North and South poles. The three vectors around the Equator can be redistributed between forming right angles between themselves, and thus their overlaps among themselves being equal to their overlap with the North- and South-pole vectors, and forming an equilateral triangle. However, this redistribution only decreases the submaximal overlaps, and does not lower the coherence.
Alternatively, one may evaluate the energy of these sets under, e.g., the $6$-frame potential. Because for $p=6$ all overlaps contribute to the energy, the degeneracy is broken and the set in which the vectors at the Equator form an equilateral triangle is lowest in $6$-frame potential. In fact, this set is precisely the set that minimizes the $6$-frame potential and is therefore the MOF for this weighting function, $\bar \Phi_{2, 5}[p = 6]$. Note that, as we just explained, $\bar \Phi_{2, 5}[p = 6] \in \{\bar \Phi_{2, 5}[p \to \infty] \}$. Because the Bloch sphere grants us a graphical interpretation of the different sets, it is easy to see in Fig. \ref{fig:cohdeg} that the vectors in $\bar \Phi_{2, 5}[p = 6]$ are more uniformly distributed on the sphere than the vectors of the other sets in $\{\bar \Phi_{2, 5}[p \to \infty] \}$. The phenomenon that we just described for $d=2$ and $n=5$ is also observed for other values of $d$ and $n$ because it is a consequence of the fact that the coherence disregards submaximal overlaps (other examples appear in Sec. \ref{results}).

Note that our argument in the previous paragraph is centered on the appearance of systematic degeneracies in the coherence. We are disregarding degeneracies arising from the orthogonal symmetry intrinsic to MOFs.  

\subsubsection{Universal maximally orthogonal frames}\label{subsubsec:univarsal_mof}

We have identified tightness and low coherence as associated with different weighting functions and thus generally incompatible in a MOF. We have also discussed the concept of complex projective $p$-designs and the uniformity of frames. Uniformity is to be expected of MOFs that are the minimizers of an energy that favors short-range interactions, without focusing only on nearest-neighbor interactions like coherence does. Thus neither tight nor Grassmannian frames are, generally, maximally uniform given their size. There can exist, however, for certain values of $d$ and $n$, universal MOFs, i.e. frames that are simultaneously tight, minimally coherent and maximally uniform given their size. Equiangular tight frames (ETFs) are the most prominent example \cite{strohmer2003grassmannian}. ETFs are tight frames where every vector has the same overlap with every other vector in the frame.
Their existence is restricted to $n \leq d^2$ \cite{sustik2007on}. ETFs with $n=d^2$ are known in quantum mechanics as symmetric, informationally complete, positive operator-valued measurements (SIC-POVMs). The characteristics that give them their name make SIC-POVMs useful for perfect reconstruction of input states from measurement data \cite{renes2004symmetric, scott2006tight, slomczynski2015highly, yoshida2022construction, gu2020fast, jameson2024optimal}. Their existence for any dimension, $d$ was conjectured by Zauner \cite{zauner2011quantum}. SIC-POVMs are also complex projective 2-designs \cite{renes2004symmetric}. 

The fact that no ETFs exist for $n > d^2$ tells us that $d^2$ marks the threshold beyond which the frame vectors start to crowd the vector space. Since the frame cannot be equiangled while minimizing the energy, some vectors will necessarily have both short- and long-range neighbors. This gives rise to the dynamics afforded by the competition between short- and long-range interactions that we discussed before. It is for $n > d^2$ that tightness, low-coherence and uniformity start to be incompatible. Nevertheless, there do exist highly symmetric frames for $n>d^2$. A notable example are sets of mutually unbiased bases (MUB) \cite{schwinger1960unitary, ivonovic1981geometrical}. A pair of orthonormal bases $\{|\phi_i\rangle\}_{i=1}^d$, $\{|\psi_j\rangle\}_{j=1}^d$ of $\mathbb C^d$ are mutually unbiased if 
\begin{equation}
    |\langle \phi_i | \psi_j \rangle|^2 = \frac{1}{d} \,.
\end{equation}
The maximum number of MUB $\mathcal M(d)$ obeys $\mathcal M(d) \leq d+1$, with equality only if $d$ is an integer power of a primer number. Thus, if $d$ is an integer power of a prime it is possible to construct highly symmetric frames of $d(d+1)$ vectors, by combining the $d+1$ MUB, where every vector has only nearest-neighbors with overlap $1/\sqrt{d}$ (the vectors of the other bases) and next-nearest-neighbors  with null overlap (the other vectors of the same basis). Despite not being ETFs, frames built from $d+1$ sets of MUB are Grassmannian and tight \cite{jasper2019game} and complex projective $2$-designs \cite{klappenecker2005mutually}. 

\section{Related numerical optimization problems}

\label{predecesors}

From the definition of maximally orthogonal frames one can see that finding them constitutes a multivariate minimization problem. Finding the global minima of a non-convex multivariate function is a complex task, without guarantee of success in most cases. A notable exception is the finding of tight frames, ${\rm FP}_1$, where the local minima are all degenerate and thus global minima are commonplace \cite{benedetto2003finite}. In the remaining cases, finding MOFs can be tackled with numerical optimization. In this section, we discuss the relation between the problem of finding MOFs and other established numerical problems.
  
\subsection{The Thomson problem}\label{subsec:thomsonproblem}

The Thomson problem aims to determine the minimum energy arrangement of $N$ electrons constrained to the surface of the unit sphere and subject to electrostatic repulsion \cite{thomson1904on, saff1997distributing}. Thus, if $r_{ij} = \norm{\boldsymbol r_i - \boldsymbol r_j}$ is the distance between each pair of electrons, the Thomson problem for $N$ electrons is equivalent to finding the arrangement $\{\boldsymbol r_i \,| \,\norm{\boldsymbol r_i} = 1 \}_{i=1}^N$ that minimizes
\begin{equation}\label{eq:thomson_energy}
    E = \sum_{i \neq j} \frac{1}{r_{ij}} \,.
\end{equation}
Identifying the electron coordinates on the unit sphere with the coordinates of complex vectors of $d=2$ on the Bloch sphere, it is clear that the Thomson problem is equivalent to the problem of finding MOFs in $d=2$ by minimizing the Riesz $1$-energy.

The Thomson problem is a challenging numerical minimization problem, as the number of local minima is observed to grow exponentially with $n$. In fact, it has been used as a benchmark for global optimization algorithms. There is agreement between all numerical and theoretical methods on what is the global minimum for arrangements of $n \lessapprox 100$ and putatively global minima have also been obtained for $n \lessapprox 1000$  \cite{altschuler2004new, wales2006structure}. Of the different methods employed, a genetic algorithm combined with local optimization at each generation has proven particularly effective, reaching arrangements of $n \leq 200$ \cite{morris1996geneticalgorithm}. The genetic algorithm that we present in the next section can be seen as a generalization of the genetic algorithm employed on the Thomson problem \cite{erber1995comment}.

\subsection{Optimal packings in complex projective space}\label{subsec:optimalpacking}
\label{optimalpackings}

The problem of finding optimal packings in complex projective space consists of finding a set of lines through the origin of $\mathbb C^d$ that are geometrically as spread apart as possible, i.e. with maximum angle distances between them. In practice, the lines are represented by unit vectors, such that a set is represented by a unit norm frame and the geometric spread is quantified by the coherence \eqref{eq:coherence} of the frame.
Since the coherence is invariant under changes in the global phase of the vectors, we can identify equivalence classes for all unit vectors differing by a global phase, so the set of lines in $\mathbb C^d$ is ultimately equivalent to a set of elements of complex projective space $\mathbb CP^{d-1}$. Finding optimal packings in complex projective space is thus equivalent to finding MOFs for ${\rm FP}_\infty$ or ${\rm RE}_\infty$, i.e. Grassmannian frames. 

There exist a number of theoretical and numerical results on Grassmannian frames, which are neatly summarized in the review by Jasper et al. \cite{jasper2019game}. In the review, the authors advertise a website by the name of Game of Sloanes (in honor of Neil Sloane, who hosts a website with the best known packings in Euclidean space) that hosts an open competition to find putatively optimal results for packings in complex projective space \cite{gameofsloanes}.  

The numerical method used to obtain the best known packings listed on the Game of Sloanes consists on sequentially applying two algorithms: local optimization on the Grassmannian manifold \cite{edelman1998geometry, medra2014flexible} followed by alternating projection \cite{tropp2005designing}. As its name implies, the algorithm based on local optimization on the Grassmannian manifold is specific to this problem. On the contrary, alternating projection is rather general and can be used in a variety of settings. In any case it does not work that well on its own, mainly because it benefits from an adequate initial condition, like the one outputted by the local optimization. For this reason, it is interesting to see how a completely general approach like a genetic algorithm fares against these specialized techniques. To this regard, it must be noted that although the search for a Grassmannian frame is rooted in the minimization of the coherence \eqref{eq:coherence}, in practice, to obtain a smooth function, the coherence is defined as the limit 
\begin{equation}
    \mu (\Phi) \equiv \lim_{p \to \infty} \left( {\rm FP}_p(\Phi) \right)^{\frac{1}{2p}}
    \label{eq:smoothcoherence}
\end{equation}
Then, a relatively small value of $p$ is used initially, and several local optimizations are performed sequentially. After each local optimization, the value of $p$ is increased and a new local optimization begins. This process is repeated until convergence is reached \cite{medra2014flexible}. Due to the finite values of $p$ used in practice, the actual numerical minimization is equivalent to the problem of finding MOFs for the p-frame potential with finite $p$. However, in this case, $p$ acts as a numerical hyperparameter that is tuned to optimize results with respect to a figure of merit: the coherence \eqref{eq:coherence}.

\subsection{Optimal reference states in quantum classifiers}
\label{quantumclassifier}

The concept of maximally orthogonal states is introduced in the context of a single-qubit classifier to refer to a set of three or more states in the Bloch sphere that would optimally serve as reference states for each class in the classification process \cite{perezsalinas2020data}. In an $n$-partite classification one would have to select $n$ states of the qubit as reference states, assigning each of them to one of the classes. Then, the training process would strive to learn to assign to each data entry a state on the Bloch sphere that was closest, i.e. with the largest overlap, to the reference state of the corresponding class. This process is facilitated if the reference states are evenly distributed in the state space of the qubit. This is so that the basin of states that are closest to a given reference state is of equal size for all reference states, preventing default biases in the classifier that would have to be overcome during the learning process.  In Sec. \ref{definition} we discussed that uniformity is a property only of some MOFs that are the minimizers of an adequate energy function. However, as we discussed in that same section, for frames with few elements, there exist universal MOFs that achieve tightness, minimal coherence and uniformity. Consequently, in a qubit, and for small sets, coming up with sets of maximally orthogonal states is relatively straightforward, with some imagination it can even be done graphically. This is the case in Ref. \cite{perezsalinas2020data} where only sets of up to six maximally orthogonal states are considered and the challenge of obtaining larger sets is not discussed. The idea is further developed in Ref. \cite{rocajerat2024qudit}, where they study a single-\emph{qudit} classifier as the natural generalization of the single-qubit classifier. In order to generate large sets of optimal reference states of a qudit it was necessary to formalize the concept of maximal orthogonality, where the nuanced distinction between maximum orthogonality and maximum uniformity becomes apparent, and to develop the genetic algorithm that we present in this paper. The equivalence between maximally orthogonal states of a qudit and MOFs is clear from the fact that the coordinates of all possible normalized pure states of a $d$-dimensional qudit form the complex projective space $\mathbb CP^{d-1}$.

\section{The genetic algorithm}
\label{algorithm}

A genetic algorithm is a general optimization method inspired by Darwinian natural selection. A population of candidate solutions to the optimization problem is evolved across time in successive generations. At each generation, the fittest individuals among the population are selected and through recombination, mutation and survival, they generate the next population. The process is iterated until convergence is reached. The process of recombination attempts to stochastically combine the good characteristics of one individual with those of another, giving rise to an over-all better individual. The process of mutation introduces noise, allowing a random exploration of the state space. The process of survival introduces determinism in the algorithm. Because the fittest individuals are allowed to survive, it ensures a steady flow toward better solutions, without stochastic setbacks.
The problem of finding MOFs lends itself to be implemented as a genetic algorithm because the recombination process can be implemented straightforwardly by combining subsets of vectors of the two parent frames. 

In our particular setting, for a given $d$ and $n$, a population, $P_{d, n} = \{\Phi_{d, n}^k\}_{k=1}^N$, will be a collection of putative MOFs, the individuals, with $N$ the size of the population. The vectors in each frame are normalized, $\langle \phi_i^k | \phi_i^k\rangle = 1 \, \forall i, k$, and encoded as a coordinate vector in the orthonormal basis of $\mathbb C^d$. To eliminate degeneracies due to symmetry, we fix the first vector of each frame to be $|\phi_1^k\rangle = (1, 0, \ldots, 0) \, \forall k$. In all other vectors, the global phase is eliminated $|\phi_{i\neq1}^k\rangle = (z_{k, 1}, \ldots, z_{k, d})$, with $z_{k, 1} \in \mathbb R$ and $z_{k, l \neq1} \in \mathbb C$.

Genetic algorithms are typically formulated as a maximization problem, where the fitness $F$ is the function to maximize. Here, we define fitness as the negative energy Riesz $s$-energy: $F(\Phi) = -{\rm RE}_s(\Phi)$. At each generation, we sort the individuals by fitness and select a number of them to serve as parents of the next generation. It is common in genetic algorithms to select the parents stochastically, with a probability proportional to their fitness, to delay the convergence process and efficiently explore the parameter space. Here we employ a rudimentary deterministic alternative. We fix a fitness gap, $\Delta F$, and select four parents in decreasing order of fitness, starting from the fittest individual and ensuring that the next parent has a fitness at least $\Delta F$ smaller than the previous parent. The fitness gap is determined as a fraction of the fitness of the fittest individual in that generation $\Delta F = \alpha_{\rm div} F_{\rm max}$. The diversity ratio, $\alpha_{\rm div}$, is a hyperparameter of the algorithm that we set to $\alpha_{\rm div} = 0.1$.

Recombination is implemented as a random crossover of vectors from two individuals. A trivial recombination could consist on generating a random integer $i$ between $1$ and $n$ to create a child with the $i$ first vectors of parent A and the last $n - i$ vectors form parent B. However, the vectors within a frame are, a priori, not sorted in any particular fashion, i.e. their order does not reflect any geometrical structure in the manifold that parametrizes the vectors.  Therefore, the subset of vectors inherited from each parent would be random, consisting of vectors from parent A that could be arbitrarily close to vectors from parent B, and thus throw away most of the orthogonality gained over the course of the optimization process. 
Because of this, a trivial recombination like the one just described would, most likely, create a child with a fitness much lower than that of its parents. To prevent this, we sort the vectors of each parent by distance to the first vector of each set (which is fixed and the same for all sets). After the sort, the order of the vectors in a set does reflect some structural properties. Namely, the first vector of the set marks a reference point in the manifold that is common for all sets. After sorting, we apply the random recombination by drawing the first $i$ vectors from parent A, which are expected to be relatively closer to the reference point, and the last $n - i$ vectors from parent B, which are expected to be relatively far from the reference point. To better understand this process, let us consider the case $d=2$. Now the manifold is the Bloch sphere and the reference point is given by the vector $(1, 0)$, which marks the north pole. Selecting the first $i$ vectors from parent A corresponds to selecting vectors covering a spherical cap around the north pole with an approximate height determined by the ratio $i/n$. Then, selecting the $n - i$ vectors from parent B corresponds to selecting vectors covering a spherical cap around the south pole with an approximate height determined by the ratio $(n - i)/n$. Although nothing guarantees that the two caps do not overlap or, contrarily, leave blank a spherical strip in between them, at least the geometrical structure within each cap is preserved and the fitness of the child will only depend on the seam between the two caps. A similar recombination strategy is used in Ref. \cite{morris1996geneticalgorithm}.

Following the same reasoning, we implement mutations by sorting a set by distance to the first vector, generating a random integer $i$ between $1$ and $n$ and a random phase $\theta \in [0, 2\pi)$ and rotating the last $n - i$ vectors of the sorted set by the relative phase, i.e. $(z_1, z_2, \ldots, z_d) \to (z_1, z_2 e^{i \theta}, \ldots, z_d e^{i \theta})$. For $d=2$, this amounts to rotating the last $n - i$ vectors of the set an angle $\theta$ around the north pole. 

In addition to the usual operations of a genetic algorithm: selection of the fittest, recombination, and mutation, we also implement local optimization. At each iteration, after the four parents are selected and before recombination and mutation, they are each subjected to a local optimization process. To implement the optimization, we encode a set as a $(n-1) \times d \times 2$ dimensional vector where the first element is the real part of the second element of the first vector, the second element is the imaginary part of the second element of the first vector, the third element is the real part of the third element of the first vector and so on, i.e. $\Phi_{d, n} \to (\Re(z_{2, 1}), \Im(z_{2, 1}), \ldots, \Im(z_{2, d}), \Re(z_{3, 1}), \ldots)$. Due to the necessity to enforce normalization on the vectors composing each set throughout the optimization process, this is a case of multivariate constrained optimization which we tackle using sequential least-squares programming. The local optimization runs until either convergence or a maximum number of iterations, which constitutes another hyperparameter, are reached. From a genetic perspective, applying a local optimization to the parents that is later inherited by the next generation introduces a Lamarckian aspect to an otherwise Darwinian evolution process. Genetic algorithms based on this hybrid approach are typically termed memetic algorithms \cite{moscato2000on}.

In summary, the algorithm runs as follows:
\begin{enumerate}
    \item Select energy function, i.e. select the fitness function as the negative of the energy function.
    \item Generate initial population. We generate $N$ sets of $n$ normalized random vectors drawn from the Haar distribution.
    \item Select the four ``fittest'' individuals as the parents (respecting the selected fitness gap, $\Delta F$).
    \item Apply local optimization to the parents.
    \item Generate twelve children by recombining every parent with every other parent (note that each pair of parents mates twice, exchanging the roles of parents A and B in each case).
    \item Generate four other children by mutating each parent once.
    \item Form a new population with the four parents and the sixteen children and evaluate the fitness of the fittest individual.
    \item If the maximum number of iterations or the convergence criterion are reached, halt, otherwise go to step 3 with the new population.
\end{enumerate}
To induce the halting of the algorithm, one can fix a maximum number of iterations or a convergence criterion based on the rate of change of the fitness across generations.

We arbitrarily fix the number of parents to four, the number of recombinations to twelve, the number of mutations to four and consequently the population size to twenty (inherited from Ref. \cite{morris1996geneticalgorithm}), these numbers could be left as hyperparameters.

More numerical details can be found in App. \ref{app:numerics}

\section{Results and discussion}
\label{results}

\subsection{The Thomson problem}

\begin{figure}
    \centering
    \includegraphics[width = 0.75\columnwidth]{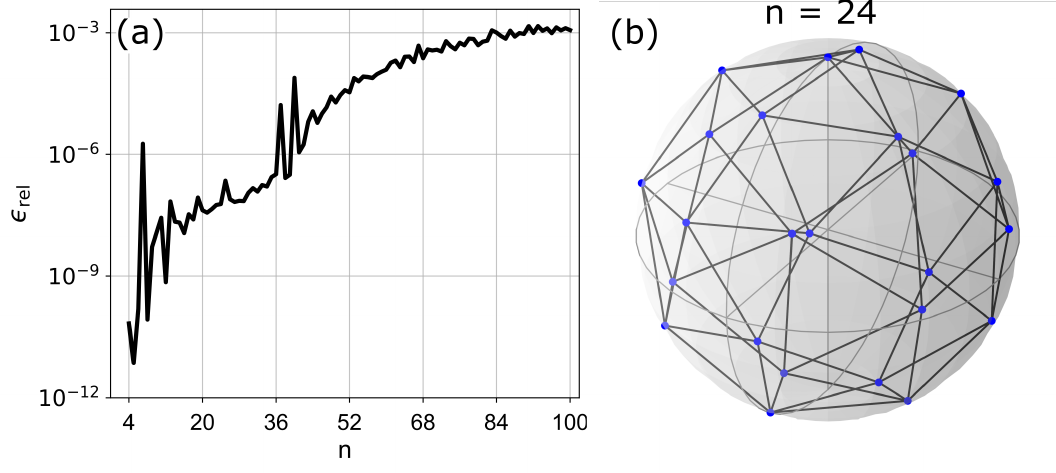}
    \caption{Results for the Thomson problem. a) Relative difference between the energy \eqref{eq:thomson_energy} for the frames obtained with the GA and the energies reported in Ref. \cite{wiki:Thomson_problem} up to $n = 100$. b) For illustration purposes, the Bloch-sphere representation of the MOF obtained with the GA for $n=24$ is shown, which forms a snub cube. Each blue dot represents an electron and the black lines connect nearest neighbors.}
    \label{fig:ThomsonEnergy}
\end{figure}

As presented in Sec. \ref{subsec:thomsonproblem}, for $d=2$ and using the Riesz $1$-energy as the weighting function, the problem of finding MOFs is equivalent to the Thomson problem. Therefore, we use the Thomson problem as a benchmark for the GA. Figure \ref{fig:ThomsonEnergy}(a) shows the relative energy difference, $\epsilon_{\rm rel} = |{\rm RE}_1^{\rm GA} - {\rm RE}_1^{\rm ref}|/|{\rm RE}_1^{\rm ref}|$, between the frames produced by the GA for the Thomson problem and the best known arrangements reported in Ref. \cite{wiki:Thomson_problem} for configurations of up to $n = 100$ points. The GA produces optimal or near-optimal frames with a relative error under $\sim0.1\%$ for $n \leq 100$. 

\subsection{General maximally orthogonal frames}

As discussed in Sec. \ref{definition} the definition of MOF is not unique and depends on the choice of weighting function. Different choices are associated with different properties for the associated MOF. We have discussed three properties that one may seek in a frame: tightness, low coherence and uniformity, which are generally incompatible. Since obtaining tight frames is numerically trivial, here we focus on obtaining frames with low coherence and high uniformity. 

It is important to realize that just because we associate certain properties to the minimizers of certain energy functions, this does not mean that the best way to numerically obtain a frame with a given property is to minimize the corresponding energy function. In particular, our testing on the GA has revealed that minimizing the coherence is not the best way to yield frames with low coherence. The minimization process tends to converge to local minima that are far from optimal. As discussed in Sec. \ref{subsubsec:degeneracy}, this is because the coherence disregards any submaximal overlaps between vectors. Accordingly, in the optimization process, there is no incentive to reposition vectors of a frame that are only involved in submaximal overlaps. This is true even if these vectors are far from their optimal position required to achieve minimal coherence. We see this phenomenon as akin to the vanishing gradient problem in neural networks. When the energy function becomes, or in this case is (by construction), insensitive to some of the optimization parameters, the optimization process is unable to tune these parameters, even if doing so sufficiently would eventually result in a lower energy and thus better performance. In contrast, minimizing a softer energy function, such as the Riesz $s$-energy or the $p$-frame potential with finite $s$ and $p$, which in principle offer no guarantee of leading to low-coherence (Grassmannian) frames, yields better results.
With this, the theoretical one-to-one relationship between energy functions and properties gives way, in practice, to a situation where the choice of energy function is part of the tuning of the algorithm. The parameters $s$ or $p$ become hyperparameters of the GA, that can be tweaked in the search for frames that maximize certain properties, such as uniformity or coherence (See Sec. \ref{subsec:optimalpacking} for a discussion of how this is also the case in SOTA algorithms for the search of optimal packings). 

We do not undertake such systematic hyperparameter optimizations. Following the poppy-seed bagel theorem (See Sec. \ref{uniformity}), we select the Riesz $s$-energy with $s=2d$ as the energy function to minimize by the GA, and we evaluate the results in terms of tightness, coherence and uniformity. We find that with this choice the GA obtains highly symmetric frames with optimal or near optimal coherence values that are in many cases tight and reasonably uniform. We have tested up to $d=7$ and find that performance starts dropping for $d > 5$ and $n > 40$. Further optimization with respect to a particular performance metric, such as coherence or uniformity, should be possible by tweaking $s$ or switching to a $p$-frame potential.

In the following, first we define the metrics and visualization methods that we use to assess the performance of the GA and then we discuss the results.

\subsubsection{Performance metrics and visualization methods}
\label{subsec:metricsbounds}

To evaluate the performance of the GA, we monitor several metrics. Most trivially, we measure the coherence \eqref{eq:coherence} and its distance to the applicable coherence bound (See Sec. \ref{coherence}). A frame is an optimal packing (optimal Grassmanian frame) if it saturates the bound. Similarly, to quantify whether a frame is tight, we make use of the Welch bound \eqref{eq:welch} to define the looseness of a frame as 
\begin{equation}
    \lambda(\Phi_{d, n}) = \mathrm{FP}_1(\Phi_{d, n}) - (n^2/d - n) \,.
    \label{eq:looseness}
\end{equation}
A frame is tight if its looseness is zero. We also check if the Welch bound \eqref{eq:welch} is saturated for $p > 1$ to qualify whether a frame is a $p$-design. Complementarily, we quantify the uniformity of a frame by measuring the mesh norm. The mesh norm of a frame is defined as
\begin{equation}
\label{eq:meshnorm}
    \rho(\Phi_{d, n}) = \min_{|\psi\rangle\in\mathbb C^d}\max_{k<1<N}|\langle\psi|\phi_k\rangle| \,.
\end{equation}
It measures the radius of the largest gap left by vectors of the frame in the vector space. Given the inverse relation between overlap and angle distance, a small mesh norm indicates a large radius. For two frames of the same $d$ and $n$ the larger mesh norm corresponds to the more uniform frame. For reference, note that in the example of Fig. \ref{fig:cohdeg} frame a) has a mesh norm of $1/\sqrt{2} \approx 0.707$, while frame c) has a mesh norm of $\approx 0.852$.

Given the geometric and symmetry properties expected of MOFs, it is also interesting to devise visualization techniques that reveal such properties.
For $d=2$, the Bloch sphere provides a way to visually assess the frames (See Figs. \ref{fig:cohdeg} and \ref{fig:ThomsonEnergy}).
In this fashion, it is possible to visualize that in certain high symmetry MOFs the vectors of the frame correspond to the vertices of certain platonic solids and other emblematic polyhedra.
In higher dimensions, there is no clear way to visualize the full manifold and the coordinates of the frame vectors on it. Instead, we explore here two alternative visualization methods.

The first consists of plotting the overlaps of each vector of the frame with every other vector. Figures \ref{fig:cohdeg} and \ref{fig:overlaps_bestE} employ this method. It clearly reveals the presence or absence of certain symmetries in the arrangement of vectors. In many cases, it possible to discern the angle set of the frame, which is the number of distinct overlaps between the different vectors of the frame. Besides, it is trivial to relate this plot to the coherence of the frame, since the coherence is given by the largest overlap (excluding the one overlap that each vector has with itself). 
In a sense, this type of plot is a visual proxy for coherence.

Alternatively, in Appendix \ref{app:share} we employ another method that is a visual proxy for uniformity. In this method, we measure and plot the share of the vector space that is closest to each vector of the frame. To formalize this concept, we define $\beta_j$ as the normalized share of $\mathbb CP^{d-1}$ for the vector $|\phi_j\rangle \in \Phi_{d,n}$:
\begin{equation}
    \label{eq:share_hilbert}
    \beta_j = n \alpha_j \,,
\end{equation}
with $\alpha_j = |\Psi_j|/|\mathbb CP^{d-1}|$ and
\begin{equation}
    \Psi_j = \left\{ |\psi\rangle\in\mathbb CP^{d-1} \relmiddle| |\phi_j\rangle = \argmax_{|\phi\rangle\in\Phi_{d,n}} |\langle\psi|\phi\rangle| \right\} \,.
\end{equation}
Note that $\alpha_j$ is the normalized cardinality of the set $\Psi_j$ composed of all vectors $|\psi\rangle \in \mathbb CP^{d-1}$ that have more overlap with (are closer to) $|\phi_j\rangle$ than with (to) any other $|\phi_{k\neq j}\rangle\in\Phi_{d,n}$.
A value of $\beta_j \sim 1 \ \forall j$ indicates that the vectors of the frame are evenly distributed in the vector space.
Note that it is also possible to define a metric associated with this visualization method by computing the standard deviation $\sigma_\beta$ of the set $\{\beta_j\}_{j=1}^n$. This provides a metric of uniformity complementary to the mesh norm. See App. \ref{app:numerics} for clarification of how to numerically compute the mesh norm and the share of the vector space.

\subsubsection{Discussion of the results}

\begin{figure}
    \centering
    \includegraphics[width = \textwidth]{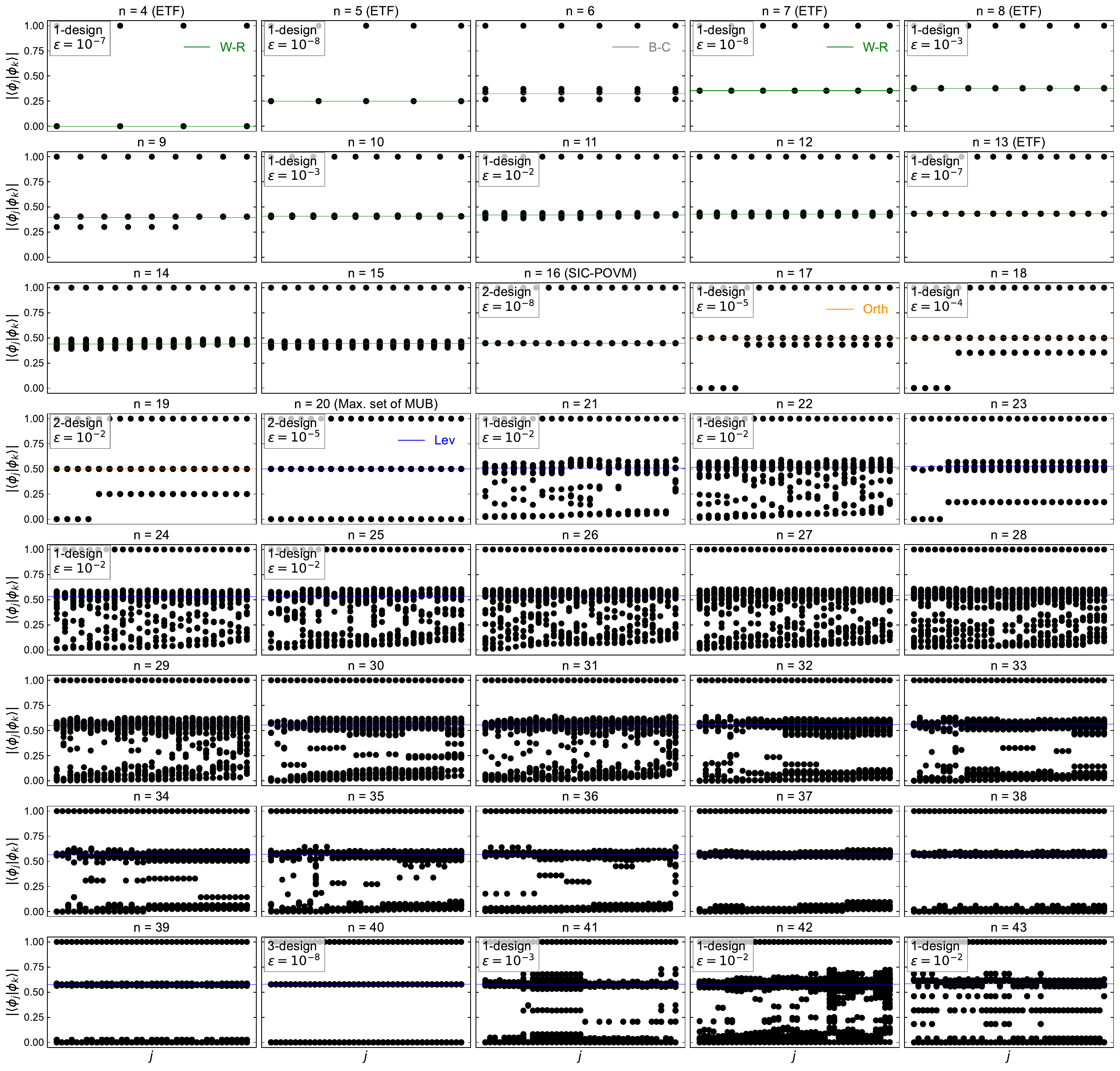}
    \caption{Maximally orthogonal frames for $d=4$ and $s=8$. Each panel shows the overlaps among vectors of the best MOF out of ten runs of the GA for a certain size $n$. The horizontal colored lines indicate the applicable coherence bound. The panel title indicates whether it is a high-symmetry frame such as an ETF, a maximal set of MUB or a SIC-POVM. A box inside the panel indicates whether the frame is a $p$-design and the tolerance $\varepsilon$ with which the corresponding Welch bound \eqref{eq:welch} is saturated.
    }
    \label{fig:overlaps_bestE}
\end{figure}

To illustrate the capabilities of the GA, we focus on the case $d=4$. In Figure \ref{fig:overlaps_bestE}, we show the frames obtained for $n \in [4, 43]$. For each configuration, we plot the result with lowest energy out of ten independent runs of the GA. As explained before, each panel displays the overlap of each vector $|\phi_k\rangle\in\bar\Phi_{4,n}[s=8]$ with all the others including itself. Additionally, in each panel, we mark the applicable coherence bound with a colored horizontal line, we indicate whether the frame is an ETF or a maximal set of MUB in the title, and, in the cases where the frame saturates the Welch bound for some $p$, we label the frame as a $p$-design in a box within the panel. The box also includes the tolerance $\varepsilon$ with which the bound is saturated. We do not label a MOF as a $p$-design unless the tolerance is $\varepsilon \leq 10^{-2}$.

This visualization method helps identify frames with high symmetry. For instance, for an ETF only two rows of overlaps are present: one with unit value corresponding to the overlap of each vector with itself, and another corresponding to a uniform band of value equal to the Welch-Rankin bound \eqref{eq:welchrankin}. The GA produces several high-symmetry frames: ETFs at $n=4,5,7,8,13,16$, the maximal set of MUB at $n=20$, and a $3$-design at $n=40$. The vectors of this frame for $n=40$ correspond to the diagonals of the Witting polytope, a four-dimensional regular complex polytope with $240$ vertices \cite[Example 6]{hoggar1982tdesings}.

It is interesting to observe the creation and destruction of certain structures as $n$ varies. After the set of MUB at $n=20$, particularly after $n=23$, the plot evidences that the vector space starts to be overcrowded with vectors. The frames from $n=24$ to $n=28$ exhibit very little structure in their overlaps. However, from $n=29$ we start to see a gradual sorting of the overlaps that culminates for $n=40$. Similarly, the transition from the SIC-POVM for $n=16$ to the set of MUB for $n=20$ is also interesting. When an additional vector is added from $n=16$ to $n=17$, $4$ vectors reposition to form one of the bases that will later constitute the set of MUB at $n = 20$. A similar phenomenon seems to occur at $n=23$, but it does not manifest in the following values of $n$.

\begin{figure}
    \centering
    \includegraphics[width = 0.65\textwidth]{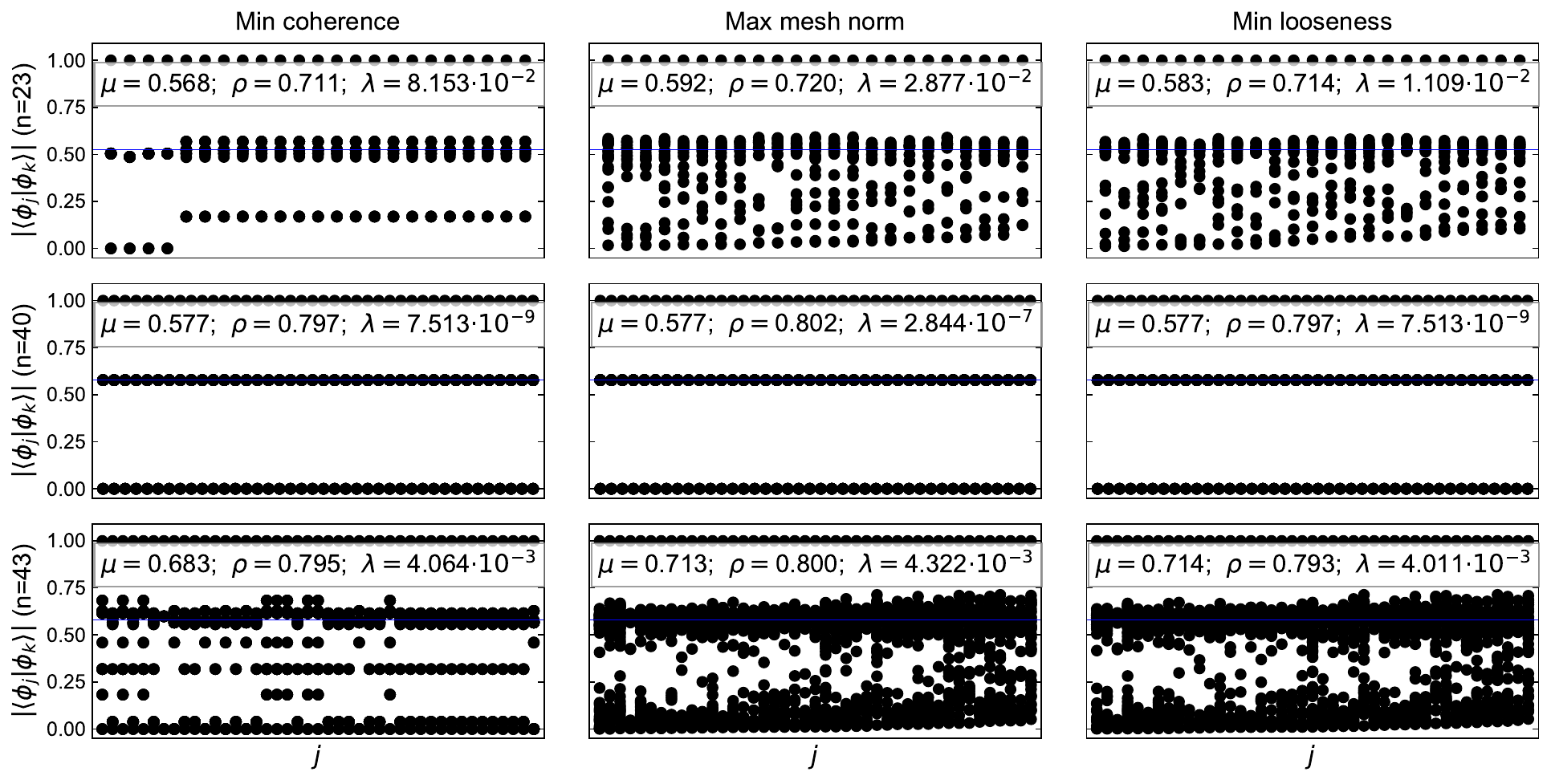}
    \caption{Genetic algorithm stochastic results. Overlaps among vectors of the maximally orthogonal frames for $d=4$ and $s=8$ and $n=23, 40, 43$. Each column shows the result of a different run of the GA that optimizes a different metric. The horizontal blue line indicates the Levenstein bound \eqref{eq:leven}. The coherence $\mu$, mesh norm $\rho$, and looseness $\lambda$ of each frame are indicated in a box within each panel.}
    \label{fig:overlaps_metrics}
\end{figure}

Given the stochastic nature of the GA, it is possible to perform several runs of the algorithm for the same configuration of $d$, $n$ and energy function, and analyze the difference between the resulting frames in terms of coherence, looseness and uniformity. In Fig.~\ref{fig:overlaps_metrics} we show the overlap structure of three different configurations, for $d=4$ and $n=23, 40, 43$. For each configuration we plot the frame with lowest coherence, highest mesh norm (most uniform) and lowest looseness (closest to being tight). For $n=40$, given the high symmetry of the $3$-design, there are no appreciable differences between the three cases. In contrast, for $n=23$ and $n=43$ there are clear differences in the overlap structure between the frames that optimize each metric. Frames that minimize the coherence tend to display smaller angle sets, the overlaps are arranged in well-defined rows. On the other hand, frames that maximize uniformity or minimize looseness have a more scattered distribution of overlaps.

These results indicate that the GA correctly produces universal MOFs, such as ETFs and sets of MUB, for the values of $n$ that they exist. This suggests that the frames obtained for other values of $n$ are also optimal or near optimal, as indicated by the fact that they show coherences that are close to the applicable bound and are in some cases tight. See App. \ref{app:share} for a complementary discussion of these results based on the alternative visualization method in terms of the share of the vector space of each vector. The GA produces similar results in other dimensions (d=$2, 3, 5, 6$), not shown here for conciseness. For $d=7$ the GA fails to provide the SIC-POVM at $n = 49$, indicating a degradation of performance for higher dimensions and frame sizes.

\subsection{The Game of Sloanes: minimizing coherence} 

As discussed in Sec. \ref{subsec:optimalpacking} Grassmannian frames are optimal packings in complex projective space. In Fig. \ref{fig:GoSbenchmark} we compare the coherence of the MOFs produced by the GA for the Riesz $s$-energy with $s = 2d$ with the coherence of the best known packings listed on the Game of Sloanes website \cite{gameofsloanes}. We also plot the applicable bound as a reference of optimal coherence. For the results of the GA we plot two series: one corresponding to the frame produced by the GA for each configuration of $d$ and $n$, and the other obtained with a technique referred to as AUTO in the Game of Sloanes website \cite{gameofsloanes}. This technique consists of obtaining a frame for a given $d$ and $n$ by removing a vector from the best known packing for $d$ and $n+1$. In certain cases, the resulting frame has lower coherence than the frame that can be obtained by running an optimization algorithm such as the GA for $d$ and $n$ directly. Note that the series corresponding to the Game of Sloanes also relies on the AUTO technique. 

In Fig. \ref{fig:GoSbenchmark} we can see several instances where AUTO yields better coherence than a dedicated run of the GA, these tend to occur for values of $n$ preceding a configuration for which the bound is saturated, i.e. before optimal Grassmannian frames. For example, for $d=4$ and $n=24, \ldots, 39$, $d=5$ and $n=26, \ldots, 29$, or $d=6$ and $n=31, \ldots, 35$. As discussed in Sec. \ref{subsubsec:univarsal_mof}, for certain configurations of $d$ and $n$, such as $d=4$ and $n=40$, $d=5$ and $n=30$, and $d=6$ and $n=36$, there exist universal MOFs that are simultaneously Grasmmannian, tight and uniform. In these cases, the minimization of the Riesz $s$-energy with $s=2d$ produces an optimal Grassmannian frame that saturates the applicable bound. For other configurations, minimizing the energy and minimizing coherence are generally incompatible objectives, and as a result the frames produced by the GA are far from the coherence bound and the best known packings. This is supported by the fact that increasing the value of $s$, such that the energy minimized by the GA is a better proxy for coherence, results in frames with lower values of coherence. With some careful finetuning this may allow the GA to even match or beat the best-known packings posted on the Game of Sloanes. However, in our tests, we have not managed to achieve this. For very large values of $s$ the performance of the algorithm worsens. This is probably caused by a change in the energy landscape that makes it less amenable for global optimization by the GA. 

Besides the discrepancy between energy and coherence, there may be an intrinsic factor behind the success of the AUTO technique. Some of the best known packings are also obtained with the AUTO technique, such as $d=4$ and $n=26, \ldots, 39$, $d=5$ and $n=26, \ldots, 29$, or $d=6$ and $n=32, \ldots, 35$ \cite{gameofsloanes}. This is the case even though SOTA algorithms such as the ones discussed in Sec. \ref{optimalpackings} are geared towards minimizing coherence. It could be the case that this is just another manifestation of the discrepancy between energy and coherence, because even SOTA algorithms do not minimize the coherence but a smooth version of it that incorporates all overlaps and not just the maximal ones. However, if we assume that the energy minimized by SOTA algorithms is sufficiently similar to the coherence, the success of AUTO seems to indicate that the energy landscape becomes extremely complex for certain configurations, to the point that it is easier to find frames with lower coherence for configurations with a larger number of vectors, and then remove vectors to arrive to a frame for the desired configuration. The most extreme example of this is $d=4$ and $n=27$. One achieves a frame with lower coherence by first adding $13$ vectors and solving for $n=40$ with the GA or a SOTA algorithm, and then removing $13$ vectors from the resulting optimal Grassmannian frame, rather than solving directly for $d=4$ and $n=27$.

Furthermore, we wonder if the orthoplex bound, which proves that frames obtained with AUTO by removing vectors from maximal sets of MUB are optimal Grassmannian frames, may be generalizable to any optimal Grassmanian frame that saturates the Levenstein bound. In other words, if for every MOF $\bar \Phi_{d, n}$ such that $\mu(\Phi_{d, n})$ saturates the Levenstein bound, for example $d=4$ and $n=40$, there exists a bound with value $\mu(\Phi_{d, n})$ and applicable for some $m < n$.  The existence of such a bound would prove that some or all of the configurations obtained with AUTO in those cases, such as $d=4$ and $n=27, \ldots, 39$, are not only the best known packings, but actually optimal.

Finally, note that AUTO-generated frames are necessarily suboptimal in terms of uniformity. If we assume that the starting frame is uniform, removing vectors leaves a gap in the vector space, i.e. an area with poor coverage by vectors of the frame. This is analogous to the phenomenon illustrated in Fig. \ref{fig:cohdeg} and discussed in Sec. \ref{subsec:properties_MOF} for $d=2$ and $n=5$. Note that the frame in column a) of Fig. \ref{fig:cohdeg} has been obtained using AUTO from the optimal Grassmannian frame for $n=6$, which corresponds to the maximal set of MUB. Likewise, a redistribution of vectors without affecting the maximal overlaps should be possible whenever AUTO is used, improving the uniformity of the frame without affecting coherence. When a maximal set of MUB exists, this is the case for frames saturating the orthoplex bound, for $d^2 < n < d(d+1)$, which are obtained with AUTO from the corresponding set of MUB at $n=d(d+1)$.

\begin{figure}
    \centering
    \includegraphics[width = 0.8\textwidth]{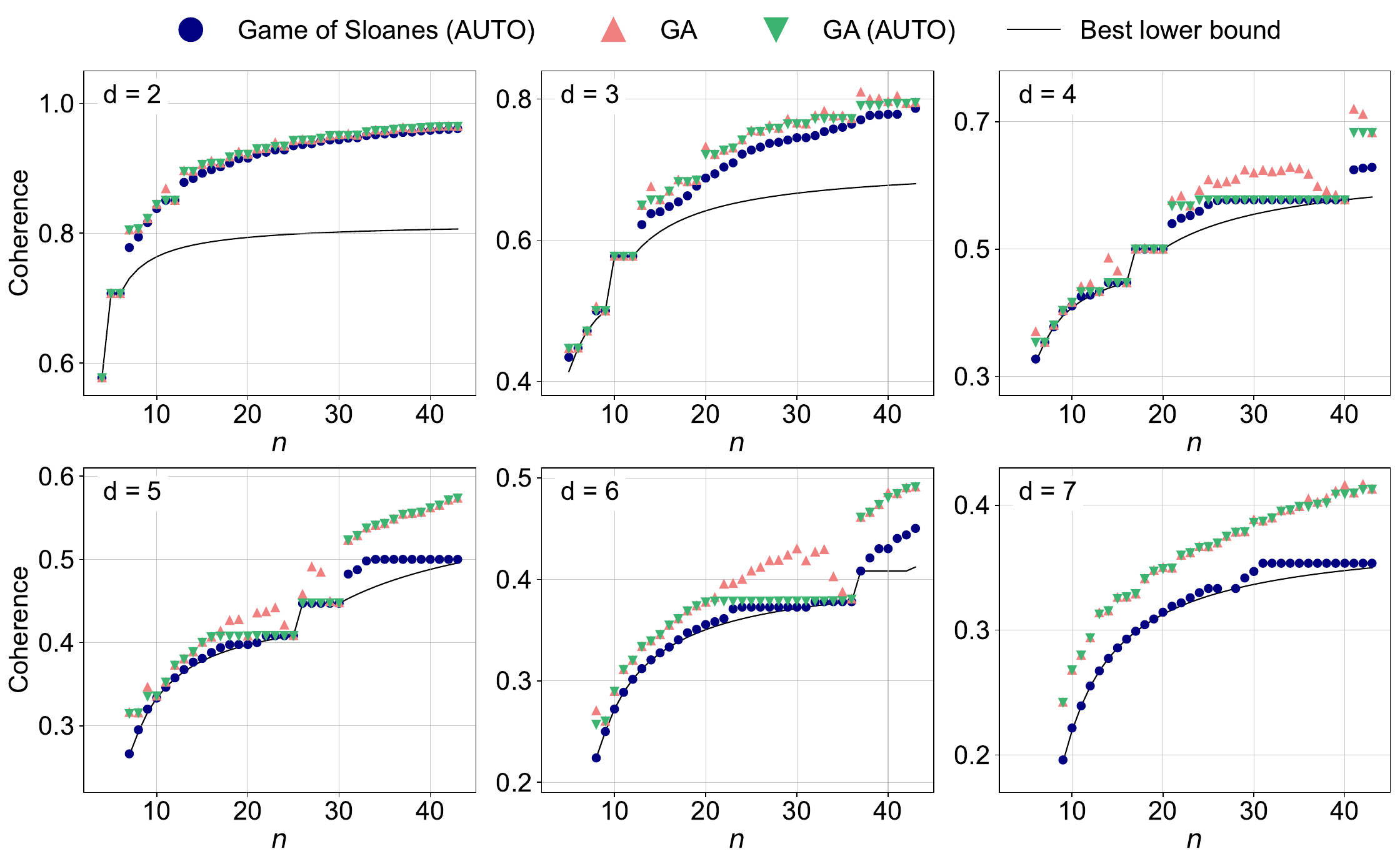}
    \caption{Comparison with the Game of Sloanes. Each panel shows the coherence of the best frames obtained out of ten runs of the GA and of the best known packings extracted from the Game of Sloanes \cite{gameofsloanes} for different dimension $d$ and sizes $n=d, \ldots, 43$. The solid black line marks the applicable coherence bound.}
    \label{fig:GoSbenchmark}
\end{figure}

\section{Conclusions}
\label{conclusions}

In this paper we have defined maximally orthogonal frames and discussed their relation with popular frame families such as tight or Grasmmannian frames. We have demonstrated that there is not a unique definition of maximally orthogonal frames, since they depend on the choice of energy function to weigh the different pairwise overlaps. As a consequence, we have established that obtaining each type of maximally orthogonal frame is a multivariate global minimization problem (of the corresponding energy function) that can be tackled numerically and discussed its connection with other related numerical problems such as the Thomson problem, the finding of optimal packings in complex projective space and the finding of optimal reference states for quantum classifiers. Then, we have presented a genetic algorithm that can generate maximally orthogonal frames of any dimension and size. The genetic algorithm uses a hybrid approach with local optimization at each generation. We have specifically designed the recombination and mutation steps to preserve the local structure of the parent frames.

We have devised visualization methods for high-dimensional frames and established that the GA is able to produce highly-symmetric universal frames such as equiangular tight frames, SIC-POVMs and maximal sets of mutually unbiased bases. In addition, we have assessed the performance of the GA by applying it to the Thomson problem and the problem of finding optimal packings in complex projective space. We find that the GA produces optimal or near-optimal frames in both cases. Despite some attempts to finetune the hyperparameters of the genetic algorithm, we have not managed to beat SOTA algorithms for these problems. We believe this reflects the algorithm’s simple design focused on demonstrating feasibility rather than optimizing for peak performance.

The genetic algorithm may be upgraded by implementing a non-deterministic selection strategy with an adaptive rate to exercise better control over the convergence dynamics and achieve a better exploration of the state space. It is also possible that a better choice of energy function unlocks new performance gains for certain metrics such as the coherence. Finally, given the success of the GA it may be worth exploring the use of SOTA evolutionary strategies such as the covariance matrix adaptation evolution strategy (CMA-ES) to generate maximally orthogonal frames.

The code for this paper can be found in Ref. \cite{code}.

\section*{Acknowledgements}

We are grateful to David Zueco for his encouragement throughout the project, and to Jes\'us Carrete for his careful reading of the manuscript.
This research project was made possible through the access granted by the Galician Supercomputing Center (CESGA) to its supercomputing infrastructure. The supercomputer FinisTerrae III and its permanent data storage system have been funded by the Spanish Ministry of Science and Innovation, the Galician Government and the European Regional Development Fund (ERDF).
We acknowledge funding through Grant No. CEX2023-001286-S, from 
MCIN/AEI/10.13039/501100011033 and the EU NextGenerationEU/PRTR, and No. TED2021-131447B-C21,
from MCIN/AEI/10.13039/501100011033.  This work was also supported by the Spanish Ministry for Digital Transformation and of Civil Service of the Spanish Government through the QUANTUM ENIA project call - Quantum Spain, EU through the Recovery, Transformation and Resilience Plan – NextGenerationEU within the framework of the Digital Spain 2026.
We also acknowledge the Gobierno de Arag\'on (Grant No. E09-17R Q-MAD), Quantum Spain and the CSIC Quantum Technologies Platform PTI-001.
S. R-J.  acknowledges financial support from Gobierno de Arag\'on through a doctoral fellowship.
J. R-R acknowledges support from the Ministry of Universities of the Spanish Government through the grant FPU2020-07231.

\appendix

\section{Numerical details}
\label{app:numerics}

\subsection{Scaling and runtime}
We find that the GA converges in between $6$ and $30$ generations, depending on the configuration. It tends to converge faster for configurations in which it is possible to find a highly symmetric frame such as an ETF. For instance, for $d=4$ and $n=16$ it finds the SIC-POVM in $5$ generations. 

Runtimes depend on both the dimension $d$ and the frame size $n$. The state vector size for the local optimization scales as $(n-1) \times d$, and the computation of the energy scales as $d \times n \times (n-1)$. For reference, we provide in Table \ref{tab:runtimes} the runtimes on an M1 Mac Mini for different configurations, with the number of generations of the GA capped at $15$.
\begin{table}[]
\begin{tabular}{@{}rrrrr@{}}
\toprule
\multicolumn{1}{c}{$d$} & \multicolumn{1}{c}{$n$} & \multicolumn{1}{c}{Runtime}   & \multicolumn{1}{c}{Generations} & \multicolumn{1}{c}{Time/generation} \\ \midrule
$2$                     & $6$                     & $0.6 \, {\rm s}$              & $5$                                               & $0.12 \, {\rm s}$                   \\
$2$                     & $24$                    & $99.9 \, {\rm s}$             & $15$                                              & $6.6 \, {\rm s}$                    \\
$4$                     & $16$                    & $28.7 \, {\rm s}$             & $7$                                               & $4.1 \, {\rm s}$                    \\
$4$                     & $40$                    & $9 \, {\rm min} \, 6 \, {\rm s}$ & $9$                                               & $1 \, {\rm min} \, 7 \, {\rm s}$       \\ \bottomrule
\end{tabular}
\label{tab:runtimes}
\caption{Runtimes of the genetic algorithm for different configurations. The number of generations was capped at $15$.}
\end{table}

\subsection{Computing the mesh norm and the share of the vector space}
The mesh norm and the share of the vector space can only be computed approximately by performing a finite sampling of $N_H$ vectors from the Haar distribution for each dimension $d$. 
In Fig. \ref{fig:haar_convergence} we show how the mesh norm depends on $N_H$ for different configurations.  
We find that a value of $N_H = 10^7$ is sufficient to resolve changes in the mesh norm $\geq 10^{-2}$, which is sufficient to compare, e.g., the frames obtained by optimizing different metrics in Fig. \ref{fig:overlaps_metrics}.

\begin{figure}
    \centering
    \includegraphics[width=0.5\linewidth]{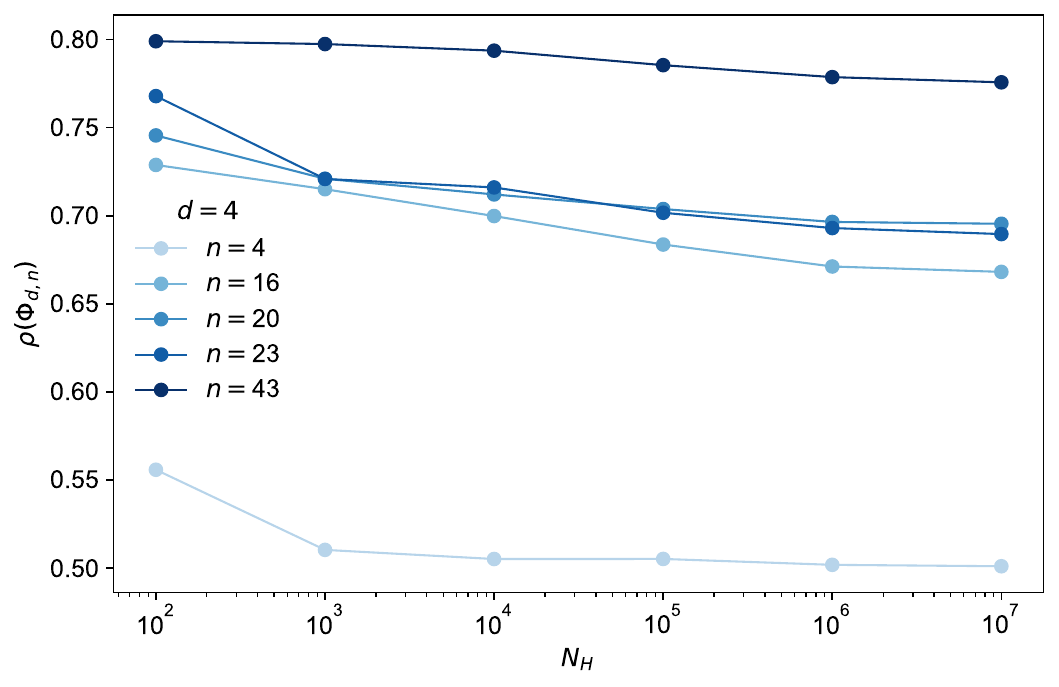}
    \caption{Dependence of the mesh norm with the number of Haar vectors $N_H$ for $d=4$ and different values of $n$.}
    \label{fig:haar_convergence}
\end{figure}

\section{Visualizing uniformity: share of the vector space}
\label{app:share}

In Sec. \ref{subsec:metricsbounds} we defined the normalized share of the vector space $\beta_j$ as a way to visualize the uniformity of a frame. By definition, the share and the mesh norm are related. The mesh norm measures the radius of the largest gap left in the vector space by vectors of the frame. The share measures the portion of the vector space that is closest to each vector of the frame. The relation stems from the fact that the vectors of the frame surrounding the largest gap will have a larger share. Therefore, the mesh norm is positively correlated with the largest values of the share. This relation is similar to the relation between coherence and overlaps, where coherence is the maximum overlap. Note also that the share of the vector space is the metric to optimize when looking for optimal reference states for a quantum classifier (See Sec. \ref{quantumclassifier}). A non-uniform distribution of the shares is associated with default biases of the quantum classifier.

In Figs. \ref{fig:shares_bestE} and \ref{fig:shares_metrics} we visualize the uniformity of the frames obtained with the GA for $d=4$. Figure \ref{fig:shares_bestE} is complementary to Fig. \ref{fig:overlaps_bestE}. In Fig. \ref{fig:shares_bestE} we show the frames obtained for $d=4$ and $n \in [4, 43]$. Each panel displays the share of the vector space $\beta_j$ of each vector $|\phi_k\rangle\in\bar\Phi_{4,n}[s=8]$. In addition, the standard deviation of the shares is indicated within each panel. Uniform frames have $\beta_j \approx 1$ for all vectors and a small standard deviation. It is interesting to note that all frames up to the SIC-POVM for $n=16$ are relatively uniform. Appreciable non-uniformities only appear beyond this point. Nevertheless, highly symmetric frames such as the set of MUB or the $3$-design at $n=40$ are again relatively uniform.

\begin{figure}
    \centering
    \includegraphics[width = \textwidth]{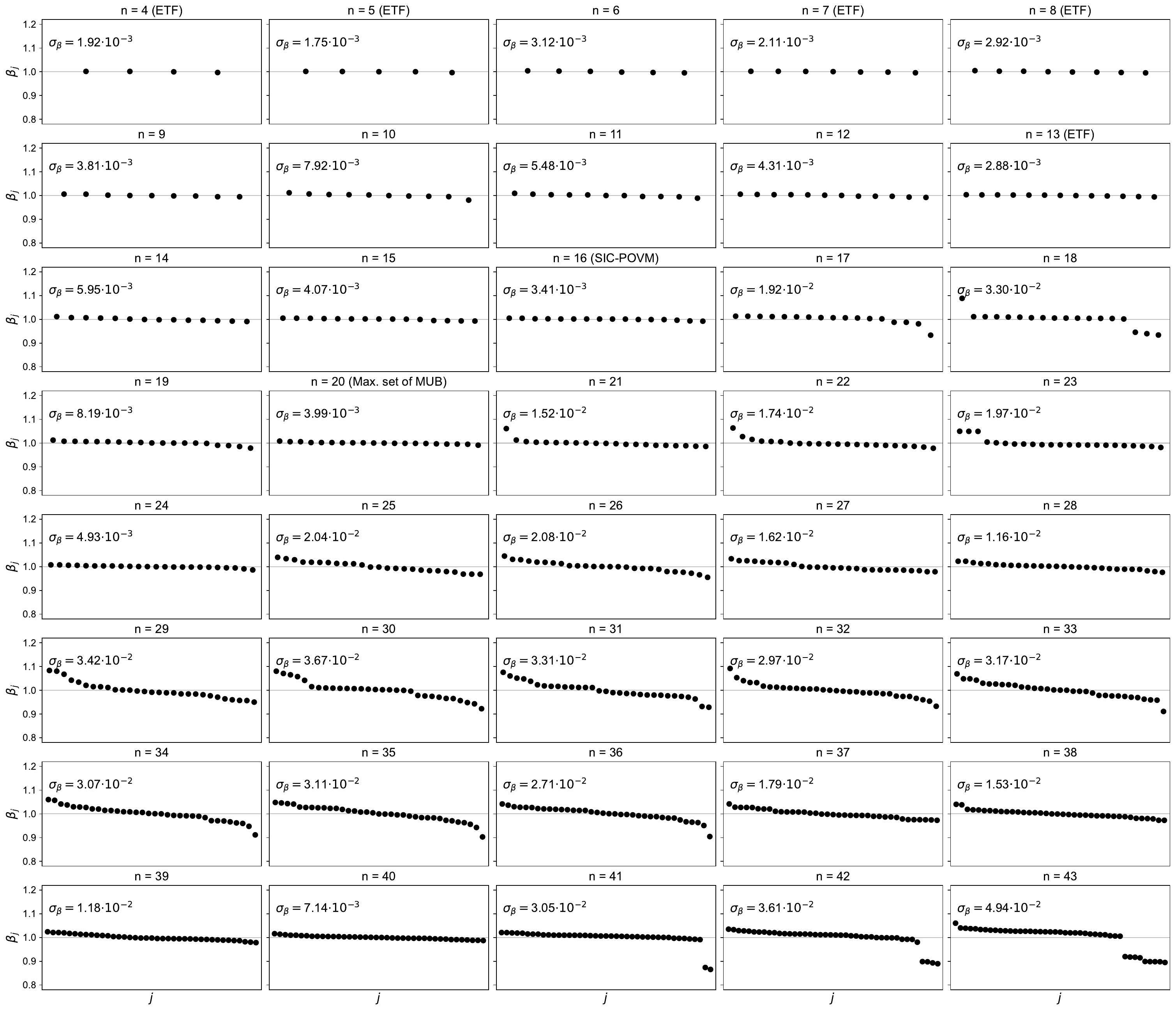}
    \caption{Maximally orthogonal frames for $d=4$ and $s=8$. Each panel shows the normalized share of the vector space $\beta_j$ \eqref{eq:share_hilbert} of each vector the best MOF out of ten runs of the GA for a certain size $n$. The horizontal black line is a guide to the eye marking the value $1$. The panel title indicates whether it is a high-symmetry frame such as an ETF, maximal set of MUB or SIC-POVM. The standard deviation of the share $\sigma_\beta$ is indicated within each panel.}
    \label{fig:shares_bestE}
\end{figure}

Figure \ref{fig:shares_metrics} is complementary to Fig. \ref{fig:overlaps_metrics}. In Fig. \ref{fig:shares_metrics} we plot the share of the vector space of three different configurations, for $d=4$ and $n=23, 40, 43$. For each configuration we plot the frame with lowest coherence, highest mesh norm and lowest looseness (closest to being tight). For $n=40$, given the high symmetry of the $3$-design, there are barely any differences between the three cases. In contrast, for $n=23$ and $n=43$ there are clear differences in the shares between the frames that optimize each metric. Frames that minimize coherence tend to display shares grouped in two or more distinct values. On the other hand, frames that maximize the mesh norm or minimize looseness have more evenly distributed shares. This is reflected in the standard deviation of the share, which is higher for the frames that minimize coherence.

We note that for several configurations in Figs. \ref{fig:shares_bestE} and \ref{fig:shares_metrics} the distribution of shares is mostly concentrated around $1$, with one or a few outliers with a value significantly different than $1$. In these cases, the standard deviation may not reflect the actual uniformity of the frame and it could be interesting to incorporate estimators of kurtosis or multimodality of the distribution of shares.

\begin{figure}
    \centering
    \includegraphics[width = 0.65\textwidth]{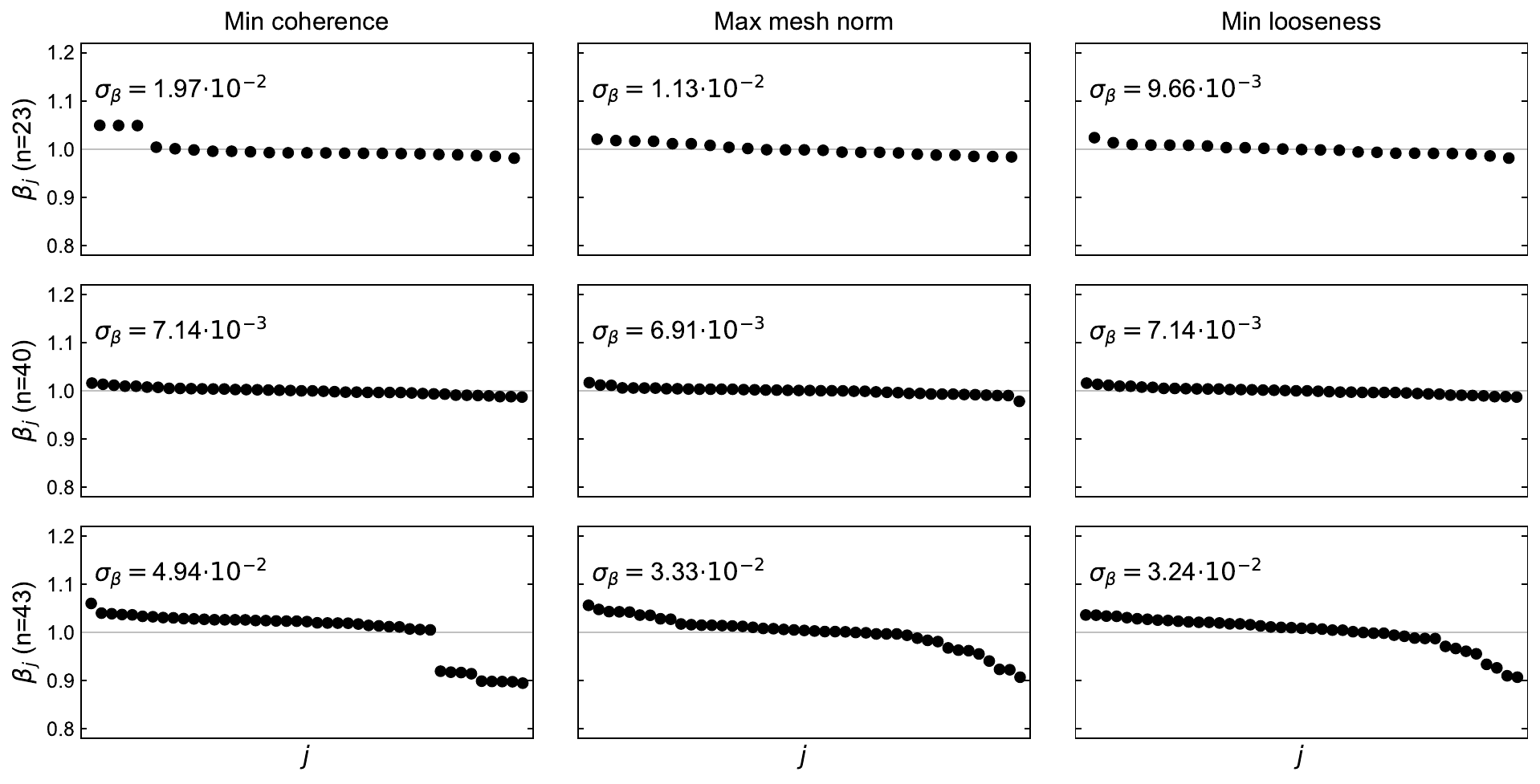}
    \caption{Genetic algorithm stochastic results. Normalized share of the vector space $\beta_j$ \eqref{eq:share_hilbert} of the maximally orthogonal frames for $d=4$ and $s=8$ and $n=23, 40, 43$. Each column shows the result of a different run of the GA that optimizes a different metric. The horizontal black line is a guide to the eye marking the value $1$. The standard deviation of the share $\sigma_\beta$ is indicated within each panel.}
    \label{fig:shares_metrics}
\end{figure}

\bibliography{main.bib}

\end{document}